\documentclass[11pt, onecolumn]{IEEEtran}
\IEEEoverridecommandlockouts

\usepackage{cite}
\usepackage{amsmath,amssymb,amsfonts,amsthm}
\usepackage{graphicx}
\usepackage{textcomp}
\usepackage{xcolor}
\usepackage{url} 

\usepackage{bbm}
\usepackage{mathtools}
\usepackage{subcaption}
\usepackage{multirow}
\usepackage{algorithm}
\usepackage[noend]{algpseudocode}

\DeclareMathOperator*{\argmin}{argmin}
\DeclareMathOperator{\EX}{\mathbb{E}}

\newtheorem{theorem}{Theorem}
\newtheorem{problem}{Problem}
\newtheorem{assumption}{Assumption}
\newtheorem{example}{Example}
\newtheorem{remark}{Remark}
\newtheorem{lemma}{Lemma}

\def\BibTeX{{\rm B\kern-.05em{\sc i\kern-.025em b}\kern-.08em
    T\kern-.1667em\lower.7ex\hbox{E}\kern-.125emX}}

\begin{document}

\title{RAIN-FIT: Learning of Fitting Surfaces and Noise Distribution from Large Data Sets}

\author{\IEEEauthorblockN{Omar M. Sleem}
\IEEEauthorblockA{\textit{Kyocera International, Inc.} \\
omar.sleem@kyocera.com}
\and
\IEEEauthorblockN{Sahand Kiani}
\IEEEauthorblockA{\textit{Pennsylvania State University}\\
szk6437@psu.edu}
\and
\IEEEauthorblockN{Constantino M. Lagoa}
\IEEEauthorblockA{\textit{Pennsylvania State University}\\
cml18@psu.edu}
\thanks{Research reported in this publication was supported by the National Heart, Lung, And Blood Institute of the National Institutes of Health under Award Number R61/33 HL164868, and the National Center for Advancing Translational Sciences, National Institutes of Health, through Grant UL1 TR002014 and Grant UL1 TR00045. The content is solely the responsibility of the authors and does not necessarily represent the official views of the NIH.}
}

\maketitle

\begin{abstract}
This paper presents a computationally efficient algorithm for estimating implicit surfaces from noisy point clouds. Modeling the surface as the zero level-set of a function spanned by known basis features, we estimate the surface representation with partial knowledge of the underlying noise distribution. The method operates without data preprocessing or hyperparameter tuning, features linear computational complexity with respect to sample size, and seamlessly generalizes to high-dimensional data. Furthermore, we provide theoretical guarantees for the almost sure convergence of the proposed algorithm. Numerical benchmarks against the state-of-the-art baselines, such as Poisson Reconstruction and Encoder-X, demonstrate superior performance in the presence of highly corrupted point clouds.
\end{abstract}

\begin{IEEEkeywords}
Surface Fitting, Manifold Optimization, Implicit Polynomial Representations, Algebraic Curves and Surfaces
\end{IEEEkeywords}

\section{Introduction}\label{sec:introduction}
Surface fitting is a fundamental mathematical technique used to model underlying data structures, essential for creating 3D models and providing ground-truth data for stereo reconstruction \cite{berger2014state}. The wide variety of available algorithms \cite{berger2013benchmark} differ primarily in their input requirements and output structures.

\subsection{Related Works}
Implicit Polynomials (IPs) are widely used for surface fitting due to their resolution-independent, compact parameterization that naturally encapsulates complex topologies. Classical IP fitting minimizes distances to zero-level sets but often suffers from coefficient sensitivity and oscillations \cite{1323797}. Alternatively, Poisson reconstruction provides smooth surfaces by solving a Poisson equation over the input data \cite{kazhdanpoisson}. However, it relies on a pre-computed vector field of oriented normals; even minor stochastic perturbations in the point cloud can misalign these normals and severely distort the reconstruction.

Recently, coordinate-based neural networks have established a new state-of-the-art by learning continuous spatial mappings \cite{tong2021polynomial}. Architectures like SIREN \cite{sitzmann2020implicit} and DeepSDF \cite{park2019deepsdf} capture fine geometric details, while methods like Encoder-X \cite{9336312} and DeepFit \cite{ben2020deepfit} target polynomial features and local structural weighting, respectively. Despite their efficacy, their strict reliance on local spatial coherence is their underlying vulnerability.

While these algorithms excel on pristine data, practical measurements are pervasively corrupted by noise. The existence of noisy data disrupts local geometry and normal vector calculations, causing methods reliant on local structure—such as Poisson reconstruction, SIREN, and Encoder-X—to fail as shown in \cite{sleem2026rain}. This exposes a critical need for a globally structured approach capable of ingesting corrupted point clouds without relying on fragile local geometric estimates. Furthermore, the proposed robust identification framework has the potential to be adapted for other diverse applications where measurement noise and data corruption are significant factors, such as the modeling of human physical activity \cite{li2025risk}.

\subsection{Key Contributions}
This paper introduces Robust Algorithm for Implicit Noisy Data Fitting (RAIN-FIT) to address estimating implicit surfaces from noisy point clouds. Building upon foundational concepts previously restricted to polynomial system identification in \cite{hojjatinia2020identification,kiani2026willems}, we significantly expand the framework to encompass a generalized, broad class of feature functions. By modeling the considered surface as the zero set of a function within the span of these feature functions, RAIN-FIT leverages partial knowledge of the noise distribution to simultaneously estimate the surface representation and the unknown noise parameters. This yields a compact, global description of the surface that naturally facilitates extrapolation tasks, such as the concise parameterization of point clouds.

Unlike traditional approaches, our globally structured approach bypasses the need for local geometric estimates, making it highly effective even with sparse samples and high-dimensional data. Another contribution of RAIN-FIT is its linear computational efficiency, which is achieved by decoupling the mathematical process into a feature-mapping computation and an accelerated surface estimation phase. The algorithm operates entirely without hyperparameter tuning or data preprocessing. Finally, we provide rigorous theoretical guarantees demonstrating that, under specific structural conditions on the feature set, the proposed algorithm achieves asymptotic exact surface recovery independently of the applied noise level.

\textbf{Notation:} Scalars, vectors, and matrices are denoted by non-bold ($x$), lowercase bold ($\mathbf{x}$), and uppercase bold ($\mathbf{X}$) letters, respectively, with $x_i$ and $X_{i,j}$ representing their individual entries. Let $\mathbb{R}$ and $\mathbb{Z}_{+}$ denote the sets of real numbers and positive integers. For a set $\mathcal{X}$, $|\mathcal{X}|$ signifies its cardinality, and for $n \in \mathbb{Z}_{+}$, $[n] \stackrel{\Delta}{=} \{1,\hdots,n\}$. Matrices $\mathbf{1}_{n\times m}$ and $\mathbf{0}_{n\times m}$ are all-ones and all-zeros matrices, and $\mathbf{e}_{i}$ is the $i$-th standard basis vector. The expected value operator is denoted by $\EX[\cdot]$. Euclidean and Frobenius norms are both represented by $\|\cdot\|$. The operator $diag(\mathbf{x})$ forms a diagonal matrix from the elements of vector $\mathbf{x}$, whereas $diag\{\mathbf{X}_{1}, \dots, \mathbf{X}_{k}\}$ constructs a block diagonal matrix from the provided matrices. Vertical matrix concatenation is denoted as $vercat\{\mathbf{X}_{1},\mathbf{X}_{2}\}$. Finally, $\sigma_{\mathrm{min}}\{\mathbf{X}\}$, $\mathbf{v}(\sigma_{\mathrm{min}}\{\mathbf{X}\})$, and $\mathcal{N}(\mathbf{X})$ represent the minimum singular value, its associated singular vector, and the null space of matrix $\mathbf{X}$, respectively.

 \section{Surface Learning}\label{sec:Surf}

Surface learning focuses on modeling the underlying structure of data by finding a mathematical representation of a surface that best fits the data points. As mentioned in the introduction, such a mathematical representation is essential if one is to extrapolate the information provided by available data. The proposed approach focuses on estimating the underlying surface from point clouds that are significantly impacted by high noise levels. More precisely, we aim at problems where the data are representable by the level set of a function belonging to the span of a given set of features/basis.

Consider a set of features $\mathbf{b}(\mathbf{x}) \stackrel{\Delta}{=}[b_{1}(\mathbf{x}), \dots, b_{N}(\mathbf{x})]^{\top}$, where $b_{i}: \mathbb{R}^{n} \rightarrow \mathbb{R}$, $\forall i \in [N]$. Moreover, let ${\mathcal{D}}_{L} \subset \mathbb{R}^{n}$ be a set of $L$ points on the surface and $\Tilde{\mathcal{D}}_{L} \subset \mathbb{R}^{n}$ be the set of $L$ noisy measurements; i.e., $\mathbf{y} \in \Tilde{\mathcal{D}}_{L}$ if and only if there exists an $\mathbf{x} \in {\mathcal{D}}_{L}$ and an allowable noise value $\boldsymbol{\epsilon} \in \mathbb{R}^{n}$ such that $\mathbf{y} = \mathbf{x} + \boldsymbol{\epsilon}$. Then, the problem that we aim at addressing can be stated as:

\begin{problem}
Given a set of noisy measurements $\Tilde{\mathcal{D}}_{L}$, find $\mathbf{a} \in \mathbb{R}^{N}$ such that $\mathbf{a}^{\top} \mathbf{b}(\mathbf{x}) = 0$ for all $\mathbf{x} \in {\mathcal{D}}_{L}$.
\end{problem}

        \subsection{Surface Learning with Noiseless Data} \label{sec:noiseless}

To address this problem, we first consider the noiseless scenario, where the surface fitting objective is formulated as:
            \begin{equation} 
            \begin{aligned}
                \text{Find} \quad \mathbf{a}\in \mathbb{R}^{N}, \quad
                \textrm{s.t.} \quad \mathbf{a}^{\top}\mathbf{M}_{\mathcal{D}_{L}}\mathbf{a}=0,
                \end{aligned}
            \end{equation}\label{eq:coeff_vector}
            where $\mathbf{M}_{\mathcal{D}_{L}}=\frac{1}{L}\sum_{\mathbf{x}\in\mathcal{D}_{L}}\mathbf{M}(\mathbf{x}) \text{ and } \mathbf{M}(\cdot)=\mathbf{b}(\cdot)\mathbf{b}(\cdot)^{\top}.$

            Hence, the noiseless implicit surface fitting problem is equivalent to finding the coefficient vector $\mathbf{a}$ that belongs to the null space of the matrix $\mathbf{M}_{\mathcal{D}_{L}}$. The matrix $\mathbf{M}_{\mathcal{D}_{L}}$ is just the sum of terms involving the evaluation of features at the measured points, a calculation that is linear in the number of points $L$. Then, one just needs to find an eigenvector of a matrix whose size only depends on the number of features.

\subsection{Challenges Imposed by Noisy Data in Surface Learning}
     Now, let us go back to the case where  only noisy realizations $\mathbf{y}=\mathbf{x}+\boldsymbol\epsilon\in \Tilde{\mathcal{D}}_{L} \subset \mathbb{R}^n$ are available. First, note that under suitable assumptions on the noise and for a sufficiently large number of measurements $L$, we have 
    \[
\frac{1}{L}\sum_{\mathbf{y}\in\Tilde{\mathcal{D}}_{L}}\mathbf{M}(\mathbf{y}) \approx \frac{1}{L}\sum_{\mathbf{y}\in\Tilde{\mathcal{D}}_{L}}\EX[\mathbf{M}(\mathbf{y})]
    \]
 and we can think of \(\frac{1}{L}\sum_{\mathbf{y}\in\Tilde{\mathcal{D}}_{L}}\EX[\mathbf{M}(\mathbf{y})]\) as a surrogate for $\mathbf{M}_{\mathcal{D}_{L}}$ and aim at estimating the surface as the solution of  
%
    \begin{equation}
        \begin{aligned}
        \text{Find} \quad \mathbf{a}\in \mathbb{R}^{N}, \quad
        \textrm{s.t.} \quad \mathbf{a}^{\top}\Big(\frac{1}{L}\sum_{\mathbf{y}\in\Tilde{\mathcal{D}}_{L}}\EX[\mathbf{M}(\mathbf{y})]\Big)\mathbf{a}=0.
        \end{aligned}
    \end{equation}
    However, since $\mathbf{M}(\mathbf{y})$ depends on the structure of the feature functions within the vector $\mathbf{b(\cdot)}$ and the distribution of the added noise, there is no guarantee that $\frac{1}{L}\sum_{\mathbf{y}\in\Tilde{\mathcal{D}}_{L}}\EX[\mathbf{M}(\mathbf{y})]=\mathbf{M}_{\mathcal{D}_{L}}$, i.e., it can be a biased estimate of $\mathbf{M}_{\mathcal{D}_{L}}$.

The subsequent section details how RAIN-FIT compensates for this bias to accurately estimate the target surface.

    \section{RAIN-FIT Algorithm}\label{sec:RAIN}
    
In essence, RAIN-FIT tackles the challenge of fitting surfaces to high-level noisy data, a common issue in real-world data. RAIN-FIT stands out by offering robust noise resistance while keeping the computation lightweight. 

\subsection{Motivation}

Consider the following example which provides a first example of the structure exploited by the approach proposed in this paper.

    \begin{example} \label{ex:ex2}
        Let $\mathcal{D}_{L}\subseteq \mathbb{R}^{2}$ and $\mathbf{b}(\mathbf{x})=[x_{1}, x_{2}]^{\top}$. The expected value of the matrix $\mathbf{M}(\mathbf{y})$ can be shown to be:
        \begin{equation} \label{eq:biased}
            \begin{aligned}
                &\EX[\mathbf{M}(\mathbf{y})]=\mathbf{M}(\mathbf{x})+\EX[\mathbf{B}(\mathbf{y}, f_{\boldsymbol\theta})],
            \end{aligned}
        \end{equation}
        where we used the fact that $x_{k}= \EX[y_{k}]-\EX[\epsilon_{k}]$ and $\mathbf{B}(\mathbf{y},f_{\boldsymbol\theta})$ is defined as follows:

\begin{equation} \label{eq:bias_def}
    \mathbf{B}(\mathbf{y}, f_{\boldsymbol\theta}) = \begin{bmatrix}
        B_{11} & B_{12} \\
        B_{21} & B_{22}
    \end{bmatrix},
\end{equation}
where the elements are given by:
\begin{align*}
    B_{11} &= 2y_{1}\EX[\epsilon_{1}] - 2(\EX[\epsilon_{1}])^{2} + \EX[\epsilon_{1}^{2}], \\
    B_{12} &= y_{1}\EX[\epsilon_{2}] + y_{2}\EX[\epsilon_{1}] - \EX[\epsilon_{1}]\EX[\epsilon_{2}], \\
    B_{21} &= y_{1}\EX[\epsilon_{2}] + y_{2}\EX[\epsilon_{1}] - \EX[\epsilon_{1}]\EX[\epsilon_{2}], \\
    B_{22} &= 2y_{2}\EX[\epsilon_{2}] - 2(\EX[\epsilon_{2}])^{2} + \EX[\epsilon_{2}^{2}].
\end{align*}
        It is evident from \eqref{eq:bias_def} that the bias matrix $\mathbf{B}(\mathbf{y}, f_{\boldsymbol\theta})$ is equal to zero when the first and second moments of the noise are both zero. In a more general context, the bias matrix's values are contingent on the specific characteristics of the random vectors $\mathbf{y}$ and $\boldsymbol\epsilon$, as well as the underlying feature functions, within the vector $\mathbf{b}(\cdot)$.
    \end{example}
    
As illustrated, the structure of the feature functions allows $\EX[\mathbf{M}(\mathbf{y})]$ to be additively decomposed into the true noiseless matrix $\mathbf{M}(\mathbf{x})$ and a stochastic bias term $\EX[\mathbf{B}(\mathbf{y}, f_{\boldsymbol\theta})]$. Because direct computation over noisy observations inherently yields biased estimates, isolating this noise-induced bias is critical. This explicit separation provides the mathematical foundation to actively compensate for noise and accurately recover the underlying surface.

     \subsection{Conditions on Feature Functions}

         To have a consistent way of defining the elements and dimension of the feature vector $\mathbf{b(\cdot)}$, we define the model order $\gamma \in \mathbb{Z}_{+}$, such that $\gamma$ indicates a family of ordered functions, $\mathcal{C}_{\gamma}=\{b_{1}^{(\gamma)}, b_{2}^{(\gamma)}, \dots, b_{N_{\gamma}}^{(\gamma)}\}$, that share a common parameter $\gamma$, where $b_{i}^{(\gamma)}:\mathbb{R}^{n}\rightarrow \mathbb{R}$, $\forall$ $i\in [N_{\gamma}]$ lies on the set of interest. For a given model order $\gamma$ and the corresponding family definition $\mathcal{C}_{\gamma}$, the feature vector:
        \begin{equation} \label{eq:basis_vector_def}
            \begin{aligned}
                \mathbf{b}(\mathbf{x})&=[\underbrace{b_{1}^{(0)}(\mathbf{x}),\dots b_{N_{0}}^{(0)}(\mathbf{x})}_{\mathcal{C}_{0}},\dots,  \underbrace{b_{1}^{(\gamma)}(\mathbf{x}),\dots, b_{N_{\gamma}}^{(\gamma)}(\mathbf{x})}_{\mathcal{C}_{\gamma}}]^{\top} \\
                &= [\mathbf{b}^{(0)}(\mathbf{x})^{\top},\dots, \mathbf{b}^{(j)}(\mathbf{x})^{\top}, \dots, \mathbf{b}^{(\gamma)}(\mathbf{x})^{\top}]^{\top},
            \end{aligned}
        \end{equation}
        is defined as the vector of all the functions in $\cup_{k=0}^{\gamma}\mathcal{C}_{k}$ with $N=\sum_{k=0}^{\gamma}N_{k}$, and $\mathbf{b}^{(j)}(\cdot)$ is the vector of $N_{\gamma}$ feature functions in $\mathcal{C}_{\gamma}$. Hence, the pair $(\gamma, \mathcal{C}_{\gamma})$ provides the complete information to construct the feature vector $\mathbf{b}(\mathbf{x})$. We make the following assumption: 
        \begin{assumption}
            For the feature function $b_{i}^{(j)}(\mathbf{x})\in \mathbf{b}(\mathbf{x})$, where $i\in [N_{j}]$, and $j\in [\gamma]$, the following property is satisfied:
            \begin{equation} \label{eq:basis_prop}
                b_{i}^{(j)}(\mathbf{x}+\boldsymbol\epsilon)=\sum_{m=0}^{j}\sum_{k=1}^{N_{m}}\sum_{\ell=0}^{\gamma} \sum_{r=1}^{N_{\ell}}  c_{k,r,i}^{(m,\ell,j)}b_{k}^{(m)}(\mathbf{x})b_{r}^{(\ell)}(\boldsymbol\epsilon).
            \end{equation}    
            Moreover, for any functions denoted as $b_{*}^{(j)}(\mathbf{x})\in\mathcal{C}_{j}$ and $b_{*}^{(\ell)}(\mathbf{x})\in\mathcal{C}_{\ell}$, it follows that:
            \begin{equation} \label{eq:basis_mult}
                b_{*}^{(j)}(\mathbf{x})b_{*}^{(\ell)}(\mathbf{x})\in\text{Span}\left\{\bigcup_{k=0}^{j+\ell}\mathcal{C}_{k}\right\}.   
            \end{equation}
            In other words, the product of $b_{*}^{(j)}(\mathbf{x})$ and $b_{*}^{(\ell)}(\mathbf{x})$ is an element of the span of functions encompassing $\mathcal{C}_{j+\ell}$ and functions of lower orders. \label{A6}  
        \end{assumption} 

  Property~\eqref{eq:basis_prop} establishes a separation property: any feature evaluated at a noisy point decomposes into a linear combination of noiseless data features weighted by noise features. This is an intrinsic mathematical characteristic of standard (not necessarily orthogonal) bases, such as polynomial, trigonometric, and exponential functions. Furthermore, \eqref{eq:basis_mult} ensures that the product of any two features remains within a span that also satisfies \eqref{eq:basis_prop}.

\begin{remark}
    Condition~\eqref{eq:basis_prop} isolates $b_{i}^{(j)}(\mathbf{x})$ as a separable function of the noisy observations and the noise distribution. Coupled with \eqref{eq:basis_mult}, this guarantees that the entries of the matrix $\mathbf{M}(\mathbf{y}) = \mathbf{b}(\mathbf{y})\mathbf{b}(\mathbf{y})^{\top}$ inherit this separation.
\end{remark}

Rather than assuming complete knowledge of the noise distribution, we assume it is known only up to a finite set of parameters. Formally, we posit the following:
    
             \begin{assumption} \label{assump:noise}
            The noise vector elements $\epsilon_{i}$ and $\epsilon_{l}$ are independent for $i \neq l$ and identically distributed with a parameterized probability density $f_{\boldsymbol\theta}(\boldsymbol\epsilon)$, where the parameter $\boldsymbol\theta$ is a low-dimensional vector. Therefore, the form of the noise probability distribution $f_{\boldsymbol\theta}$ is known, while the parameter vector $\boldsymbol\theta$ is not. \label{A1}
        \end{assumption}

    \begin{remark}
    Given the RAIN-FIT's low computational complexity, Assumption~\ref{assump:noise} is not practically restrictive. One can efficiently evaluate a finite set of candidate noise distributions and their parameters.
    \end{remark}

     Given the above assumptions, we can now precisely define the problem as follows:
    
        \begin{problem}
            Given $\Tilde{\mathcal{D}}_{L}$, under Assumptions \ref{A6}-\ref{assump:noise}, simultaneously estimate the noiseless surface and the unknown noise parameters $\boldsymbol\theta$.
        \end{problem}

    \subsection{Noise Compensation with Ordered Feature Sets}
        
        In this subsection, we present a procedure to address and rectify potential bias in the estimation of $\mathbf{M}_{\mathcal{D}_{L}}$. This corrective process, herein referred to as \textit{noise compensation}, capitalizes on the structured (ordered) nature of the feature functions within $\mathbf{b}(\mathbf{x})$, and their compliance with \eqref{eq:basis_prop}.

        Exploiting this ordered characteristic of the feature functions, we can establish a relation between $b_{i}^{(j)}(\mathbf{x})$ and lower-ordered feature functions substituted at the available noisy realizations. As we shall demonstrate, this approach enables the computation of the bias introduced into $\mathbf{M}(\mathbf{x})$ and consequently, eliminates it. 
        
        As a first step, we split all the terms with functions in $\mathcal{C}_{j}$, i.e., $b_{*}^{(j)}(\mathbf{x})$, in \eqref{eq:basis_prop} and perform the expected value with respect to $\epsilon$, which yields to:
        \begin{equation} \label{eq:exp_with_sep}
            \begin{aligned}
                &\EX[b_{i}^{(j)}(\mathbf{x}+\boldsymbol\epsilon)]=\sum_{k=1}^{N_{j}}b_{k}^{(j)}(\mathbf{x})\sum_{\ell=0}^{\gamma}\sum_{r=1}^{N_{\ell}}  c_{k,r,i}^{(j,\ell,j)}\EX[b_{r}^{(\ell)}(\boldsymbol\epsilon)] \\
                &+\sum_{m=0}^{j-1}\sum_{k=1}^{N_{m}}b_{k}^{(m)}(\mathbf{x})\sum_{\ell=0}^{\gamma}\sum_{r=1}^{N_{\ell}}  c_{k,r,i}^{(m,\ell,j)}\EX[b_{r}^{(\ell)}(\boldsymbol\epsilon)], \quad \forall i\in N_{j}.
            \end{aligned}
        \end{equation}
        Let the matrix $\mathbf{A}(m, j; f_{\boldsymbol\theta})\in\mathbb{R}^{N_{m}\times N_{j}}$ be defined such that its $ki$-entry is:
        \begin{equation} \label{eq:A_matrix}
            A_{k,i}(m, j; f_{\boldsymbol\theta})\stackrel{\Delta}{=}\sum_{\ell=0}^{\gamma}\sum_{r=1}^{N_{\ell}}c_{k,r,i}^{(m,\ell,j)}\EX[b_{r}^{(\ell)}(\boldsymbol\epsilon)],    
        \end{equation}
        therefore, \eqref{eq:exp_with_sep} can be represented in the matrix form as:
        \begin{equation} \label{eq:mat_form}
            \begin{aligned}
                \EX[\mathbf{b}^{(j)}(\mathbf{x}+\boldsymbol\epsilon)] &= \mathbf{A}^{\top}(j, j; f_{\boldsymbol\theta})\mathbf{b}^{(j)}(\mathbf{x}) \\
                &+\sum_{m=0}^{j-1}\mathbf{A}^{\top}(m, j; f_{\boldsymbol\theta})\mathbf{b}^{(m)}(\mathbf{x}),
            \end{aligned}
        \end{equation}
        hence, from \eqref{eq:mat_form}, $\mathbf{b}^{(j)}(\mathbf{x})$ can be calculated as follows:
        \begin{equation}\label{eq:bi_calc}
            \begin{aligned}
                \mathbf{b}^{(j)}(\mathbf{x})=\Big[&\mathbf{A}^{\top}(j, j; f_{\boldsymbol\theta})\Big]^{-1}\Big[\EX[\mathbf{b}^{(j)}(\mathbf{x}+\boldsymbol\epsilon)] \\
                &-\sum_{m=0}^{j-1}\mathbf{A}^{\top}(m, j; f_{\boldsymbol\theta})\mathbf{b}^{(m)}(\mathbf{x})\Big].
            \end{aligned}
        \end{equation}

            It becomes evident that the calculation of $\mathbf{b}^{(j)}(\mathbf{x})$ as described in \eqref{eq:bi_calc} is dependent on $\mathbf{b}^{(m)}(\mathbf{x})$ for $\forall m < j$. Using \eqref{eq:bi_calc}, $\mathbf{b}^{(m)}(\mathbf{x})$ can be iteratively computed while the initial condition is considered to be $\mathbf{b}^{(0)}(\mathbf{x})=1$.
        The subtraction operation in \eqref{eq:bi_calc}, serves to perform the noise compensation as it implicitly rectifies any bias introduced during the estimation of $\mathbf{M}(\mathbf{x})$ when based on noisy realizations.

                \begin{remark}
            The expression in \eqref{eq:bi_calc} depends on the knowledge of the introduced noise distribution following the premise outlined in Assumption \ref{A1}.
        \end{remark}
        
        The selection of feature functions encompassed within the vector $\mathbf{b}(\cdot)$ plays a pivotal role in guaranteeing the invertibility of the matrix $\mathbf{A}^{\top}(j, j; f_{\boldsymbol\theta})$.
        \begin{assumption}
            The feature functions are chosen in a manner that ensures the invertibility of $\mathbf{A}^{\top}(j, j; f_{\boldsymbol\theta})$. \label{A7}
        \end{assumption}

        A diverse array of functions conforming to the conditions stipulated in \eqref{eq:basis_prop} and \eqref{eq:basis_mult} exist, thus affording the possibility to identify a feature vector that ensures the invertibility of the matrix $\mathbf{A}^{\top}(j, j; f_{\boldsymbol\theta})$.

        \begin{example} \label{ex:example3}
            Let $\mathbf{b}(x):\mathbb{R}\rightarrow \mathbb{R}^{N}$ be defined by the pair $(\gamma, \mathcal{C}_{\gamma})$, where $\mathcal{C}_{\gamma}=\{x^{\gamma}\}$. Correspondingly, $\mathbf{b}^{(\gamma)}(x):\mathbb{R}\rightarrow \mathbb{R}$, where $\mathbf{b}^{(j)}(x)=x^{j}$. It is then easy to realize that $\EX[\mathbf{b}^{(j)}(x+\epsilon)]$ can be expanded using the binomial theorem as:
            \begin{equation} \label{ex:monomial}
                \begin{aligned}
                    \EX[\mathbf{b}^{(j)}(x+\epsilon)] = x^{j} + \sum_{u=1}^{j} \binom{j}{u} x^{j-u}\EX[\epsilon^{u}],
                \end{aligned}
            \end{equation}
            and hence, $\mathbf{b}^{(j)}(x)$ can be iteratively computed from its noisy realization $\mathbf{b}^{(j)}(x+\epsilon)$ and its lower order estimates as follows:
            \begin{equation} \label{eq:xj_estimate}
                x^{j} = \EX[\mathbf{b}^{(j)}(x+\epsilon)] - \sum_{u=1}^{j} \binom{j}{u} x^{j-u}\EX[\epsilon^{u}]
            \end{equation}
            In order to see \eqref{ex:monomial} in the light of \eqref{eq:basis_prop}, since in that case $N_{m}=N_{\ell}=1$ and $|\mathcal{C}_{N}|=1$, the indices $i$, $k$ and $r$ can be dropped. This yields: 
            \begin{equation}
                \mathbf{b}^{(j)}(x+\epsilon)=\sum_{m=0}^{j}\sum_{\ell=0}^{\gamma}c^{(m,\ell,j)}\mathbf{b}^{(m)}(x)\mathbf{b}^{(\ell)}(\epsilon).
            \end{equation}
            By letting $c^{(m,\ell,j)}=0$ for every $m+\ell\neq j$ we get that:
            \begin{equation} \label{ex:monomial_mapped}
                \begin{aligned}
                    &\EX[\mathbf{b}^{(j)}(x+\epsilon)]=\sum_{u=0}^{j}w_{u}^{(j)}\mathbf{b}^{(j-u)}(x)\EX[\mathbf{b}^{(u)}(\epsilon)] \\
                    &=w_{0}^{(j)}\mathbf{b}^{(j)}(x)\EX[\mathbf{b}^{(0)}(\epsilon)]+\sum_{u=1}^{j}w_{u}^{(j)}\mathbf{b}^{(j-u)}(x)\EX[\mathbf{b}^{(u)}(\epsilon)].
                \end{aligned}
            \end{equation}
            It can be then realized that \eqref{ex:monomial_mapped} can be mapped to \eqref{ex:monomial} by letting the constant $w_{u}^{(j)}=\binom{j}{u}$ and with the fact that $\EX[\mathbf{b}^{(0)}(\epsilon)]=1$.
        \end{example}

While Example \ref{ex:example3} demonstrates the estimation of $\mathbf{b}^{(j)}(\mathbf{x})$ for monomial features (where $|\mathcal{C}_{\gamma}|=1$ for all $\gamma$), our broader objective is the computation of the full matrix $\mathbf{M}(\mathbf{x})=\mathbf{b}(\mathbf{x})\mathbf{b}(\mathbf{x})^{\top}$. By Assumption \ref{A6}, the elements of the sub-matrix $b^{(j)}(\mathbf{x})b^{(\ell)}(\mathbf{x})^{\top}$ belong to $\mathcal{C}_{j+\ell}$ and satisfy \eqref{eq:basis_prop}, enabling their computation via the analogous iterative process in \eqref{eq:bi_calc}.

Thus far, evaluating \eqref{eq:bi_calc} requires the exact expectation $\EX[\mathbf{b}^{(j)}(\mathbf{y})]$. In practice, this analytical value is unattainable because the true coordinates $\mathbf{x}$ are unknown and noise distribution knowledge is only partial. To overcome this limitation, we replace the theoretical expectation with an empirical sample mean estimate, yielding the following modification to \eqref{eq:bi_calc}:
    \begin{equation} \label{eq:sample_mean_est}
        \begin{aligned}
            \hat{\mathbf{b}}^{(j)}(\mathbf{x})=\Big[&\mathbf{A}^{\top}(j, j; f_{\boldsymbol\theta})\Big]^{-1}\Big[\mathbf{b}^{(j)}(\mathbf{y}) \\
            &-\sum_{m=0}^{j-1}\mathbf{A}^{\top}(m, j; f_{\boldsymbol\theta})\hat{\mathbf{b}}^{(m)}(\mathbf{x})\Big].
        \end{aligned}
    \end{equation}
    Applying \eqref{eq:sample_mean_est}, the estimate $\Hat{\mathbf{M}}(\mathbf{y})$ yields:
    \begin{equation} 
    \label{eq:bias_cancel_mat}
        \begin{aligned}
            \Hat{\mathbf{M}}(\mathbf{y}, f_{\theta}) = \mathbf{M}(\mathbf{y}) - \mathbf{B}(\mathbf{y}, f_{\boldsymbol\theta}), 
        \end{aligned}
    \end{equation}
    and hence, we obtain the matrix estimate $\Hat{\mathbf{M}}_{\mathcal{D}_{L}}=\frac{1}{L}\sum_{\mathbf{y}\in\Tilde{\mathcal{D}}_{L}}\Hat{\mathbf{M}}(\mathbf{y}, f_{\boldsymbol\theta})$.
    Analogous to the formulation in \eqref{eq:coeff_vector}, we derive the coefficient vector $\hat{\mathbf{a}}_{L}$ by identifying the singular vector $\mathbf{v}$ corresponding to the minimum singular value $\lambda_{\mathrm{min}}$ of the matrix $\Hat{\mathbf{M}}_{\mathcal{D}_{L}}$. 

\begin{remark}
The RAIN-FIT framework naturally extends to arbitrary analytic feature sets. By approximating these features using standard bases (e.g., polynomial, trigonometric, or exponential functions) over compact domains, the proposed algorithm can seamlessly adapt to virtually any analytic feature space.
\end{remark}

\begin{algorithm}[t]
    \caption{One-Time Offline Calculation of $\hat{\mathbf{M}}(., f_{.})$ }
    \label{alg:gen_toolbox}
    \renewcommand{\thealgorithm}{}
    \floatname{algorithm}{}
        \begin{algorithmic}[1]
            \State Given: The pair $(\gamma, \mathcal{C}_{\gamma})$ and the noise distribution $f$. 
            \State Construct the $N$-dimensional vector of functions $\mathbf{b}(\cdot) = [\underbrace{1}_{\mathcal{C}_{0}}, \underbrace{\mathbf{b}^{(1)}(\cdot)}_{\mathcal{C}_{1}}, \underbrace{\mathbf{b}^{(2)}(\cdot)}_{\mathcal{C}_{2}}, \dots \underbrace{\mathbf{b}^{(\gamma)}(\cdot)}_{\mathcal{C}_{\gamma}}]^{\top}$.
            \State Construct $\mathbf{M}(\cdot)=\mathbf{b}(\cdot)\mathbf{b}(\cdot)^{\top}$ with the $k\ell$-th entry as the function $M_{k\ell}(\cdot)$.
            \State Let the matrix $\hat{\mathbf{M}}(., f_{.})$ be an estimate of $\mathbf{M}(\cdot)$.
            \State Determine $\theta$ via grid search over $N$ points, selecting $\theta$ that stabilizes singular values of $\hat{\mathbf{M}}$.
            \For{$k, \ell\in[N]$}
                    \State Using Assumption \ref{A6} and \eqref{eq:sample_mean_est}, derive the functional form of $\hat{M}_{k\ell}(., f_{.})$, the $k\ell$-th entry of $\hat{\mathbf{M}}(., f_{.})$.
            \EndFor
        \end{algorithmic}
\end{algorithm}

    
To avoid computational bottlenecks, we recognize that $\hat{\mathbf{M}}(\mathbf{y}, f_{\boldsymbol\theta})$ is independent of the specific dataset $\Tilde{\mathcal{D}}_{L}$ and can be derived as a generalized function $\hat{\mathbf{M}}(\cdot, f_{\cdot})$ dependent only on the noisy sample $\mathbf{y}$ and noise parameter $\boldsymbol\theta$. Consequently, our approach decouples into two phases. First, a one-time phase (Algorithm \ref{alg:gen_toolbox}) constructs the feature vector $\mathbf{b}(\cdot)$ and analytically derives the functional form of each matrix entry $\hat{M}_{k\ell}(\cdot, f_{\cdot})$ via Assumption \ref{A6} and \eqref{eq:sample_mean_est}. Second, the RAIN-FIT phase evaluates this pre-computed functional form across the specific dataset $\Tilde{\mathcal{D}}_{L}$. The global estimate $\Hat{\mathbf{M}}_{\mathcal{D}_{L}}$ is efficiently computed by averaging these sample-wise matrices, yielding the optimal surface coefficient vector via the singular vector corresponding to the minimum singular value of $\Hat{\mathbf{M}}_{\mathcal{D}_{L}}$.
\begin{remark}
    The computational complexity of RAIN-FIT is governed by the sample size $L$ and the number of features $N$. Because the functional form of $\hat{M}_{k\ell}(\cdot, f_{\cdot})$ is derived (Algorithm \ref{alg:gen_toolbox}), the complexity is strictly determined by matrix averaging and singular value decomposition (SVD). Assuming $t$ is the time required to evaluate each $\hat{M}_{k\ell}(\mathbf{y}, f_{\boldsymbol\theta})$, computing the $N \times N$ matrix over $L$ samples takes $\mathcal{O}(L(t+N^2))$. Incorporating the $\mathcal{O}(N^3)$ SVD step, the total computational complexity is strictly $\mathcal{O}(L(t+N^{2})+N^{3})$.
\end{remark}
    \begin{algorithm}[t]
        \caption{RAIN-FIT} 
        \label{alg:cal_a}
        \renewcommand{\thealgorithm}{}
        \floatname{algorithm}{}
            \begin{algorithmic}[1]
                \State Given: Data set $\Tilde{\mathcal{D}}_{L}$ of noisy samples and the noise parameter vector $\boldsymbol\theta$.
                \State Calculate $\Hat{\mathbf{M}}_{\mathcal{D}_{L}}=\frac{1}{L}\sum_{\mathbf{y}\in\Tilde{\mathcal{D}}_{L}}\hat{\mathbf{M}}(\mathbf{y}, f_{\boldsymbol\theta})$.
                \State The coefficients vector for implicit surface fitting is $\hat{\mathbf{a}}_{L} = \mathbf{v}(\sigma_{\mathrm{min}}\{\Hat{\mathbf{M}}_{\mathcal{D}_{L}}\})$.
            \end{algorithmic}
    \end{algorithm}


    \section{Algorithm Convergence Analysis}\label{sec:Conv}
    To begin analyzing the convergence of RAIN-FIT, we first state a basic assumption that is used throughout this section.

    \begin{assumption}
    Each data point $\mathbf{x} \in \mathcal{D}_{L}$ can be represented by the zero set of an implicit mathematical function $g(\mathbf{x}) = \textbf{a}^T\mathbf{b}(\mathbf{x})$, where $\textbf{a} \in \mathbb{R}^N$. The function $g$, whose zero set includes the original point cloud, can be expressed as a linear combination of the elements in the feature vector $\mathbf{b}(\mathbf{x})$. Moreover, throughout the paper, it is assumed that for any feature function $b_{i}(\cdot) \in \mathbf{b}(\cdot)$, we let:

    \begin{itemize}
        \item $|b_{i}(\mathbf{x})| \leq c_{1} < \infty$ for any $\mathbf{x} \in \mathcal{D}_{L}$ and $\forall i \in [N]$.
        \item $\EX[b_{i}(\boldsymbol\epsilon)]  \leq c_{2} < \infty$.
        \item $\EX[(b_{i}(\boldsymbol\epsilon)-\EX[b_{i}(\boldsymbol\epsilon)])^{2}] \leq c_{3} <\infty$.
    \end{itemize}\label{A2}
\end{assumption}

    \begin{lemma}
        $\Hat{\mathbf{M}}_{\mathcal{D}_{L}}$ generated through RAIN-FIT exhibits the following entry-wise convergence property:
        \begin{equation} \label{eq:limit_relation}
            \lim_{L\to\infty}\Hat{\mathbf{M}}_{\mathcal{D}_{L}}-\mathbf{M}_{\mathcal{D}_{L}}\rightarrow 0, \quad \text{a.s.}
        \end{equation}
    \end{lemma}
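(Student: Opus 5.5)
The plan is to show that each entry of $\Hat{\mathbf{M}}(\mathbf{y}, f_{\boldsymbol\theta})$ is an unbiased estimator of the corresponding entry of $\mathbf{M}(\mathbf{x})$. The difference $\Hat{\mathbf{M}}_{\mathcal{D}_{L}}-\mathbf{M}_{\mathcal{D}_{L}}$ is then an average of independent, zero-mean random matrices with uniformly bounded variances. A strong law of large numbers for independent but not identically distributed summands (Kolmogorov's) will then give almost sure convergence entrywise.

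\textbf{Step 1 (unbiasedness).} By Assumption~\ref{A6}, every entry $M_{k\ell}(\mathbf{x})=b_k(\mathbf{x})b_\ell(\mathbf{x})$ lies in $\text{Span}\{\cup_{k=0}^{2\gamma}\mathcal{C}_k\}$. The recursion \eqref{eq:sample_mean_est} is therefore applied to an ordered feature set that again satisfies \eqref{eq:basis_prop}, and $\hat M_{k\ell}(\mathbf{y},f_{\boldsymbol\theta})$ is the same linear combination of the estimates $\hat{\mathbf{b}}^{(j)}(\mathbf{x})$. I will prove by induction on $j$ that $\EX[\hat{\mathbf{b}}^{(j)}(\mathbf{x})]=\mathbf{b}^{(j)}(\mathbf{x})$ for each fixed $\mathbf{x}\in\mathcal{D}_L$, with $\mathbf{y}=\mathbf{x}+\boldsymbol\epsilon$.
\begin{itemize}
\item The base case is $\hat{\mathbf{b}}^{(0)}=1$.
\item For the inductive step, take expectations in \eqref{eq:sample_mean_est}. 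The matrices $\mathbf{A}(m,j;f_{\boldsymbol\theta})$ are deterministic, and linearity together with the induction hypothesis gives the right-hand side of \eqref{eq:bi_calc}.
\item That right-hand side equals $\mathbf{b}^{(j)}(\mathbf{x})$ by \eqref{eq:mat_form} and Assumption~\ref{A7}.
\end{itemize}
Hence $\EX[\hat M_{k\ell}(\mathbf{y},f_{\boldsymbol\theta})]=M_{k\ell}(\mathbf{x})$.

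\textbf{Step 2 (variance bound).} Unrolling the recursion, each $\hat M_{k\ell}(\mathbf{y},f_{\boldsymbol\theta})$ becomes a fixed finite linear combination, with coefficients depending only on $\boldsymbol\theta$, of terms of the form $b_k^{(m)}(\mathbf{x})\,b_r^{(\ell)}(\boldsymbol\epsilon)$, via \eqref{eq:basis_prop}. Its deviation from the mean is therefore $\sum c\, b(\mathbf{x})\,(b(\boldsymbol\epsilon)-\EX b(\boldsymbol\epsilon))$. The bounds in Assumption~\ref{A2}, namely $|b_i(\mathbf{x})|\le c_1$ and $\mathrm{Var}\,b_i(\boldsymbol\epsilon)\le c_3$, combined with Cauchy--Schwarz, give $\mathrm{Var}(\hat M_{k\ell}(\mathbf{y}))\le C<\infty$ uniformly over $\mathbf{x}\in\mathcal{D}_L$ and over $L$.

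\textbf{Step 3 (SLLN).} Set $Z_{i}=\hat M_{k\ell}(\mathbf{y}_i)-M_{k\ell}(\mathbf{x}_i)$. These are independent across samples, which I take from the noise being independent across measurements, an implicit reading of Assumption~\ref{A1}. They have zero mean and satisfy $\sum_i \mathrm{Var}(Z_i)/i^2\le C\sum 1/i^2<\infty$. Kolmogorov's strong law then yields $\frac1L\sum_{i\le L}Z_i\to0$ almost surely. Since there are finitely many entries $N^2$, the convergence \eqref{eq:limit_relation} holds jointly.

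\textbf{Main obstacle.} The delicate part is Step 2. The variance bound needs second moments of \emph{products} of noise features, because $\hat M_{k\ell}$ involves $b^{(\ell)}(\boldsymbol\epsilon)$ for orders up to $2\gamma$. Assumption~\ref{A2} only literally controls the features in $\mathbf{b}$ up to order $\gamma$. My first attempt will use \eqref{eq:basis_mult} to rewrite these higher-order noise features as elements of the extended family, and then interpret the constants $c_2,c_3$ as applying to that family (of order $2\gamma$), which the algorithm uses anyway. A second point to make explicit is that $\mathbf{x}_i$ are treated as deterministic while only the noise is random, so the summands are non-identically distributed; this is exactly why Kolmogorov's version of the strong law, rather than the i.i.d.\ version, is needed.
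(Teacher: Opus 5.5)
Your proof is correct and follows the route the paper relies on. The details are deferred to \cite{sleem2026rain}, with Assumption~\ref{A2} supplying exactly the bounded-variance hypothesis Kolmogorov's criterion requires. The steps are unbiasedness of each compensated entry $\hat M_{k\ell}(\mathbf{y},f_{\boldsymbol\theta})$ by induction on \eqref{eq:sample_mean_est}, a uniform variance bound, and Kolmogorov's strong law for independent, non-identically distributed summands.

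The readings you adopt are the ones the paper implicitly needs as well: noise independent across samples, and Assumptions~\ref{A2} and~\ref{A7} extended to the order-$2\gamma$ family. Make explicit two further points. First, the extension must also cover the bound $c_1$ for the higher-order features $b^{(m)}(\mathbf{x})$, $m>\gamma$, evaluated at the data points, not only $c_2,c_3$ for the noise. Second, the sets $\mathcal{D}_L$ must be nested, so that the strong law is applied to a single sequence rather than a triangular array.
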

    \begin{proof}
        Please find the proof in \cite{sleem2026rain}.
    \end{proof}

    Given Assumptions \ref{A1} and \ref{A2}, it can be inferred that the null space of matrix $\mathbf{M}_{\mathcal{D}_{L}}$ is consistent with dimension one. Nevertheless, there exist two optimal vectors, denoted as $\mathbf{a}^{*}$ and $-\mathbf{a}^{*}$. As a result, we define a distance measure:
    \begin{equation} \label{eq:distance}
        d_{L}=\min_{\boldsymbol\xi\in\{\mathbf{a}^{*},-\mathbf{a}^{*}\}} \|\hat{\mathbf{a}}_{L}-\boldsymbol\xi\|.
    \end{equation}
    
    Finally, we can establish the following theorem:
    \begin{theorem}\label{thm:convergence}
       Under Assumptions \ref{A6}-\ref{A2}, as $L$ tends to infinity, the sequence $d_{L}$ generated by \eqref{eq:distance} converges almost surely to zero. In other words, the distance between the vector $\hat{\mathbf{a}}_{L}$, generated through RAIN-FIT, and the set containing the null space vectors, $\mathbf{a}^{*}$ and $-\mathbf{a}^{*}$, of the matrix $\mathbf{M}_{\mathcal{D}_{L}}$ converges almost surely to zero.
    \end{theorem}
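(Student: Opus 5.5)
The plan is a perturbation argument for the eigenvector belonging to the smallest eigenvalue, driven by the entry-wise almost sure convergence in the Lemma above. Write $\boldsymbol\Delta_{L}=\Hat{\mathbf{M}}_{\mathcal{D}_{L}}-\mathbf{M}_{\mathcal{D}_{L}}$.

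The Lemma gives $\boldsymbol\Delta_{L}\to 0$ entry-wise almost surely. Because the matrices have fixed size $N\times N$, this upgrades immediately to $\|\boldsymbol\Delta_{L}\|\to 0$ almost surely in Frobenius norm. Both matrices are symmetric: $\mathbf{M}_{\mathcal{D}_{L}}$ is a sum of $\mathbf{b}\mathbf{b}^{\top}$ terms, and $\Hat{\mathbf{M}}(\mathbf{y},f_{\boldsymbol\theta})$ is built entry-wise symmetrically via \eqref{eq:bias_cancel_mat}. Hence singular vectors coincide with eigenvectors, up to the sign ambiguity already absorbed in $d_{L}$. The matrix $\mathbf{M}_{\mathcal{D}_{L}}$ is positive semidefinite, and by Assumption \ref{A2} it has $\mathbf{a}^{*}$ (normalized) in its null space. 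The smallest eigenvalue is therefore $0$.

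The key quantitative input is a uniform spectral gap. We need a $\delta>0$, independent of $L$ for all large $L$, such that the second smallest eigenvalue satisfies $\lambda_{2}(\mathbf{M}_{\mathcal{D}_{L}})\ge\delta$. This is where I would make precise the claim that the null space is ``consistent with dimension one.'' Since entries are bounded by $c_{1}^{2}$, the sequence $\mathbf{M}_{\mathcal{D}_{L}}$ lies in a compact set. I would interpret the one-dimensionality hypothesis as holding in the limit: every limit point $\bar{\mathbf{M}}$ has $\mathcal{N}(\bar{\mathbf{M}})=\mathrm{span}\{\mathbf{a}^{*}\}$. Continuity of eigenvalues on the compact set then yields the uniform gap $\delta$. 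This step is the main obstacle, because the paper's assumptions do not literally state it. I would add it explicitly as part of the interpretation of Assumption \ref{A2}, namely that the points are rich enough that no other function in the span vanishes on them asymptotically.

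With the gap in hand, apply the Davis--Kahan $\sin\Theta$ theorem to the symmetric pair $\mathbf{M}_{\mathcal{D}_{L}}$ and $\Hat{\mathbf{M}}_{\mathcal{D}_{L}}=\mathbf{M}_{\mathcal{D}_{L}}+\boldsymbol\Delta_{L}$. Weyl's inequality shows that once $\|\boldsymbol\Delta_{L}\|<\delta/2$, the smallest eigenvalue of $\Hat{\mathbf{M}}_{\mathcal{D}_{L}}$ is isolated from the rest of its spectrum by at least $\delta/2$. Since $\Hat{\mathbf{M}}_{\mathcal{D}_{L}}$ need not be PSD, I would pick the eigenvector of smallest absolute eigenvalue, which is what $\mathbf{v}(\sigma_{\mathrm{min}}\{\cdot\})$ returns. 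For large $L$ this eigenvector is the perturbed null vector, since all other eigenvalues have modulus at least $\delta/2$. Davis--Kahan then gives
\[
\sin\angle(\hat{\mathbf{a}}_{L},\mathbf{a}^{*})\le \frac{2\|\boldsymbol\Delta_{L}\|}{\delta}.
\]
For unit vectors, $\min_{\pm}\|\hat{\mathbf{a}}_{L}\mp\mathbf{a}^{*}\|\le\sqrt{2}\,\sin\angle$. Therefore $d_{L}\le 2\sqrt{2}\|\boldsymbol\Delta_{L}\|/\delta$ for all sufficiently large $L$ on the probability-one event where the Lemma holds, and hence $d_{L}\to0$ almost surely.
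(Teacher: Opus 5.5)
Your argument is correct once the uniform spectral-gap condition you add is granted. That condition repairs the statement rather than weakening your proof.

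The paper gives no proof of this theorem here; it defers to \cite{sleem2026rain}. The only ingredients it sets up are the entry-wise a.s.\ convergence Lemma and the assertion that $\mathcal{N}(\mathbf{M}_{\mathcal{D}_{L}})$ is one-dimensional. You combine these in the natural way.

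You are right not to derive one-dimensionality from Assumptions~\ref{A1} and~\ref{A2}, because it does not follow from them. With monomial features and $\deg g<\gamma$, the polynomial $x_{1}g(\mathbf{x})$ also lies in the span and vanishes on $\mathcal{D}_{L}$. Moreover, $\mathbf{M}_{\mathcal{D}_{L}}$ changes with $L$, so even a simple null space for every $L$ would not be enough. The quantity $\lambda_{2}(\mathbf{M}_{\mathcal{D}_{L}})$ can tend to zero (e.g., if the samples accumulate near one point of the surface), and then nothing controls $\|\boldsymbol\Delta_{L}\|/\lambda_{2}(\mathbf{M}_{\mathcal{D}_{L}})$. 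Your limit-point condition, turned into a uniform $\delta$ by the compactness argument, is exactly the missing hypothesis. The compactness step itself is sound, as is the rest of the argument: Weyl, Davis--Kahan, and the $\sqrt{2}\sin\angle$ bound.

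One optional simplification lets you skip Weyl and Davis--Kahan. It also removes the need for symmetry of $\Hat{\mathbf{M}}(\mathbf{y},f_{\boldsymbol\theta})$, which you assert without checking. Start from $\sigma_{\mathrm{min}}\{\Hat{\mathbf{M}}_{\mathcal{D}_{L}}\}\le\|\Hat{\mathbf{M}}_{\mathcal{D}_{L}}\mathbf{a}^{*}\|=\|\boldsymbol\Delta_{L}\mathbf{a}^{*}\|$. It follows that the returned singular vector satisfies $\|\mathbf{M}_{\mathcal{D}_{L}}\hat{\mathbf{a}}_{L}\|\le 2\|\boldsymbol\Delta_{L}\|$. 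On the other hand, $\|\mathbf{M}_{\mathcal{D}_{L}}\hat{\mathbf{a}}_{L}\|\ge\delta\,\|\hat{\mathbf{a}}_{L}-(\hat{\mathbf{a}}_{L}^{\top}\mathbf{a}^{*})\mathbf{a}^{*}\|=\delta\sin\angle(\hat{\mathbf{a}}_{L},\mathbf{a}^{*})$. Together these give your bound $2\|\boldsymbol\Delta_{L}\|/\delta$ for every $L$ with $\lambda_{2}(\mathbf{M}_{\mathcal{D}_{L}})\ge\delta$. You then never need to identify which eigenvector of the perturbed matrix the algorithm returns.
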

    \begin{proof}
        Please find the proof in \cite{sleem2026rain}.     
    \end{proof}
    Exponential convergence of the entries of the matrix $\hat{\mathbf{M}}(\mathbf{y}, f_{\boldsymbol\theta})$ follows from results on the Central Limit Theorem. Specifically, according to Hoeffding's inequality \cite[Equation 2.3.17]{sen1994large}, when the random variables within the entries of $\hat{\mathbf{M}}(\mathbf{y}, f_{\boldsymbol\theta})$ are bounded for each $\mathbf{y} \in \Tilde{\mathcal{D}}_{L}$, RAIN-FIT achieves an exponential convergence rate.

\section{Smooth representation of data} \label{sec:smoothing}
    The preceding analysis relies on Assumption \ref{A1}, strictly confining data points to a zero set. Violating this yields irregular surfaces, as the solution to \eqref{eq:coeff_vector} fails to minimize orthogonal geometric distances \cite{841760}. To enforce smoothness, we adapt the "3L" method \cite{841760}, restricting $\mathbf{a}^{*}$ to spaces free of local extrema near the data by bounding it within a "ribbon-like" surface \cite{DANIELSSON1980227}. Denoting the augmented sets as $\mathcal{D}_{L}$, $\mathcal{D}_{L}^{-c}$, and $\mathcal{D}_{L}^{+c}$, let $\boldsymbol\Gamma_{\mathcal{D}_{L}}$, $\boldsymbol\Gamma_{\mathcal{D}_{L}^{-c}}$, and $\boldsymbol\Gamma_{\mathcal{D}_{L}^{+c}} \in \mathbb{R}^{L\times N}$ represent their respective matrices of feature evaluations $\mathbf{b}(\mathbf{x})^{\top}$. The 3L method solves:
    \begin{equation} \label{eq:3L_matrix_system}
        \begin{bmatrix}
            \boldsymbol\Gamma_{\mathcal{D}_{L}^{-c}} \\
            \boldsymbol\Gamma_{\mathcal{D}_{L}} \\
            \boldsymbol\Gamma_{\mathcal{D}_{L}^{+c}}
        \end{bmatrix} \mathbf{a} = 
        c\begin{bmatrix}
            \mathbf{-1}_{L\times 1} \\
            \mathbf{0}_{L\times 1} \\
            \mathbf{1}_{L\times 1}
        \end{bmatrix},
    \end{equation}
    where the solution is simply the pseudo inverse. 

    To integrate this approach with RAIN-FIT's matrix averaging framework, we reformulate the system in \eqref{eq:3L_matrix_system} into the equivalent form:
    \begin{equation}
        \begin{aligned} 
            &\begin{bmatrix}
                1 & \mathbf{a}^{\top}
            \end{bmatrix}
            \underbrace{\begin{bmatrix}
                c \\
                \mathbf{b}(\mathbf{x})
            \end{bmatrix}
            \begin{bmatrix}
                c & \mathbf{b}(\mathbf{x})^{\top}
            \end{bmatrix}}_{\stackrel{\Delta}{=}\Tilde{\mathbf{M}}_{-c}(\mathbf{x})}
            \begin{bmatrix}
                1 \\
                \mathbf{a}
            \end{bmatrix} = 0, \quad \forall \mathbf{x}\in \mathcal{D}_{L}^{-c},   \\
            &\begin{bmatrix}
                1 & \mathbf{a}^{\top}
            \end{bmatrix}
            \underbrace{\begin{bmatrix}
                0 \\
                \mathbf{b}(\mathbf{x})
            \end{bmatrix}
            \begin{bmatrix}
                0 & \mathbf{b}(\mathbf{x})^{\top}
            \end{bmatrix}}_{\stackrel{\Delta}{=}\Tilde{\mathbf{M}}_{0}(\mathbf{x})}
            \begin{bmatrix}
                1 \\
                \mathbf{a}
            \end{bmatrix} = 0, \quad \forall \mathbf{x}\in \mathcal{D}_{L},\\
            &\begin{bmatrix}
                1 & \mathbf{a}^{\top}
            \end{bmatrix}
            \underbrace{\begin{bmatrix}
                -c \\
                \mathbf{b}(\mathbf{x})
            \end{bmatrix}
            \begin{bmatrix}
                -c & \mathbf{b}(\mathbf{x})^{\top}
            \end{bmatrix}}_{\stackrel{\Delta}{=}\Tilde{\mathbf{M}}_{+c}(\mathbf{x})}
            \begin{bmatrix}
                1 \\
                \mathbf{a}
            \end{bmatrix} = 0, \quad \forall \mathbf{x}\in \mathcal{D}_{L}^{+c}, 
        \end{aligned}
    \end{equation}
    Analogous to \eqref{eq:coeff_vector}, this yields the minimization problem:
    \begin{equation}
        \begin{aligned}
            &\min_{\mathbf{a}} \quad \mathbf{v}^{\top}\Tilde{\mathbf{M}}\mathbf{v}, \\
            &\textrm{s.t.} \quad \mathbf{v} = [1 \quad \mathbf{a}^{\top}]^{\top}, \\
            & \Tilde{\mathbf{M}} = \frac{1}{L}\Big(\sum_{\mathbf{x}\in \mathcal{D}_{L}^{-c}}\Tilde{\mathbf{M}}_{-c}(\mathbf{x})+\sum_{\mathbf{x}\in \mathcal{D}_{L}}\Tilde{\mathbf{M}}_{0}(\mathbf{x})+\sum_{\mathbf{x}\in \mathcal{D}_{L}^{+c}}\Tilde{\mathbf{M}}_{+c}(\mathbf{x})\Big),
        \end{aligned}
    \end{equation}
    which is mathematically equivalent to the following quadratic problem with a closed-form solution:
    \begin{equation}
        \begin{aligned}
            \mathbf{a}^{*}&=\argmin_{\mathbf{a}}\mathbf{a}^{\top}\Big(\mathbf{M}_{\mathcal{D}_{L}^{-c}}+\mathbf{M}_{\mathcal{D}_{L}}+\mathbf{M}_{\mathcal{D}_{L}^{+c}}\Big)\mathbf{a} \\ 
            &+c\Big(\sum_{\mathbf{x}\in \mathcal{D}_{L}^{-c}}\mathbf{b}(\mathbf{x})-\sum_{\mathbf{x}\in \mathcal{D}_{L}^{+c}}\mathbf{b}(\mathbf{x})\Big)^{\top}\mathbf{a}.
        \end{aligned}
    \end{equation}
    
    Because only noisy realizations $\mathbf{y} \in \Tilde{\mathcal{D}}_{L}$ are available, we approximate the ribbon datasets using the geometric scaling method \cite{9336312}:
    \begin{equation} \label{eq:geometric_scaling}
        \begin{cases}
          g((1-c)\mathbf{x})=c \\
          g(\mathbf{x})= 0 \\
          g((1+c)\mathbf{x})=-c
        \end{cases}
    \end{equation}
    Using a scaling factor of $c=0.05$, we apply this transformation to the noisy samples to generate the augmented sets $\Tilde{\mathcal{D}}_{L}^{+c}$ (comprising points $(1-c)\mathbf{y}$) and $\Tilde{\mathcal{D}}_{L}^{-c}$ (comprising points $(1+c)\mathbf{y}$). RAIN-FIT is subsequently applied to these sets to compute the matrix estimates $\hat{\mathbf{M}}_{\mathcal{D}_{L}^{-c}}$, $\hat{\mathbf{M}}_{\mathcal{D}_{L}}$, and $\hat{\mathbf{M}}_{\mathcal{D}_{L}^{+c}}$, successfully yielding a smooth fit for the corrupted data.

       \begin{figure}[t]
        \centering        \includegraphics[scale=0.3]{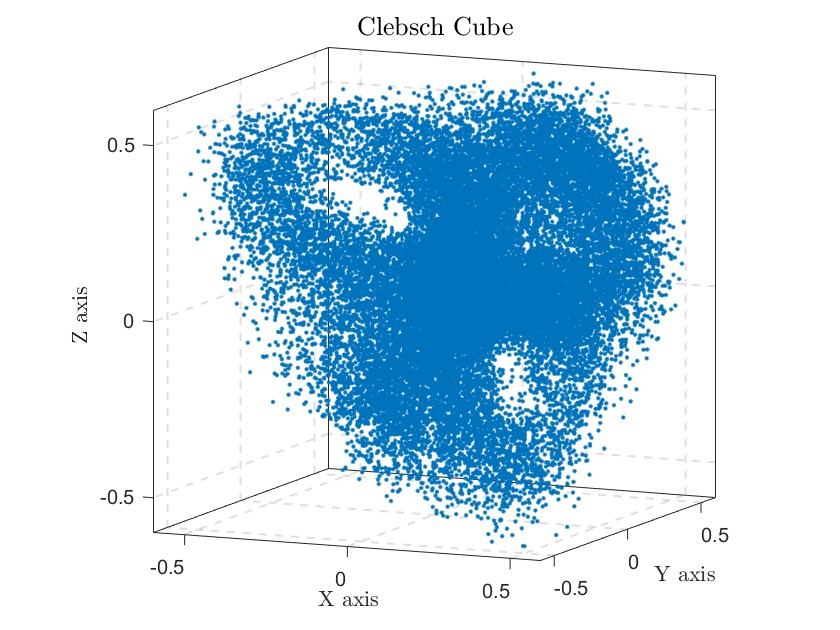}
        \caption{Point cloud of the Clebsch cubic surface corrupted by noise at a 20\% level.}
         \label{fig:ClebNoisy20}
        \end{figure}

\section{Numerical results} \label{sec:numerical_results}
While the proposed approach applies to arbitrary $\mathbb{R}^{n}$ spaces, we focus on $\mathbb{R}^{2}$ and $\mathbb{R}^{3}$ for visualization purposes. For extensive simulation results, including comprehensive benchmarks against baseline methods such as SIREN, Poisson Reconstruction, and Encoder-X, we refer the reader to \cite{sleem2026rain}.

\subsection{Experimental Setup}
\subsubsection{Feature Functions} \label{subsub:basis_desc}
We define the feature functions $\mathbf{b}(\mathbf{x}):\mathbb{R}^{n}\rightarrow \mathbb{R}^{N}$, characterized by the pair $(\gamma, \mathcal{C}_{\gamma})$, utilizing $n$-dimensional monomials of degree up to $\gamma$ as our basis functions, arranged in lexicographical order.

            \begin{figure}[t]
             \centering
             \begin{subfigure}[b]{0.48\textwidth}
                 \centering
                 \includegraphics[scale=0.3]{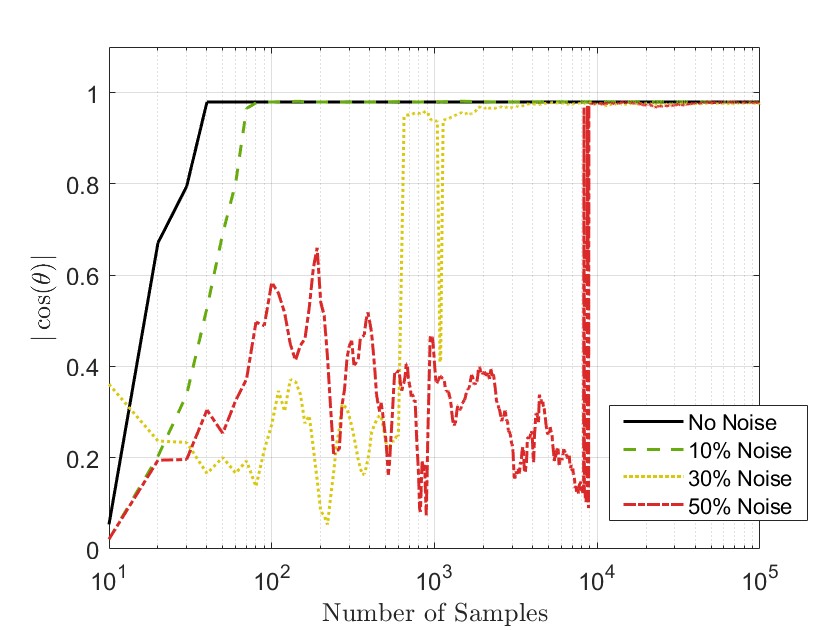}
                 \caption{$|\cos{\theta}|$ vs. the number of data samples for different noise levels.}
                 \label{fig:clebsch}
             \end{subfigure}
             \hspace{0.15cm}
             \begin{subfigure}[b]{0.48\textwidth}
                 \centering
                 \includegraphics[scale=0.3]{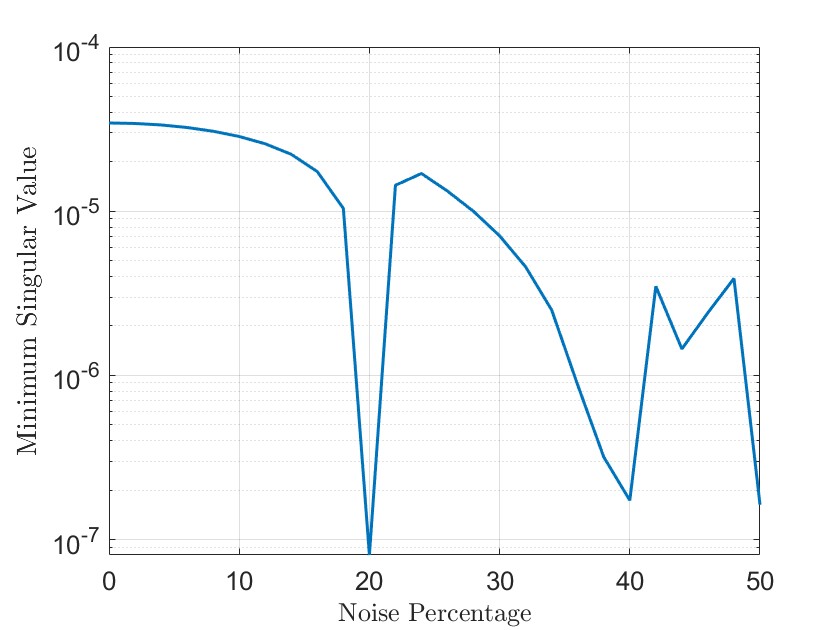}
                 \caption{Minimum Singular value of $\Hat{\mathbf{M}}_{\mathcal{D}_{L}}$ vs. noise percentage.}
                 \label{fig:search_clebsch}
             \end{subfigure}
            \caption{Convergence and parameter estimation for the Clebsch cubic. (a) Absolute cosine similarity $|\cos\theta|$ between the estimated and optimal coefficient vectors, demonstrating asymptotic recovery across varying noise levels. (b) Grid search over the noise parameter; the global minimum singular value of $\hat{\mathbf{M}}_{\mathcal{D}_{L}}$ accurately identifies the true injected noise level ($20\%$).}
            \label{fig:clebsch_results}
            \end{figure}

\subsubsection{Added Noise} \label{subsub:noise_desc}
We assume zero-mean IID uniform noise $\epsilon_{i}\sim U[-u,u]$ for $i\in [n]$ is added to all data points. The unknown noise parameter vector $\boldsymbol\theta\in\mathbb{R}$ thus consists solely of the scalar noise bound $u$. This magnitude $u$ is dataset-dependent, defined as $\max_{\mathbf{x} \in \mathcal{D}_{L}} \|\mathbf{x}\|_{\infty}$. For example, a 10\% noise level applied to an arbitrary dataset $\mathcal{D}_{L}=\{[-0.8, -0.2]^{\top}, [0.5, 0.2]^{\top}\}$ yields $u = 0.1 \times 0.8 = 0.08$.

            \begin{figure}[t]
    \centering
    \includegraphics[scale=0.23]{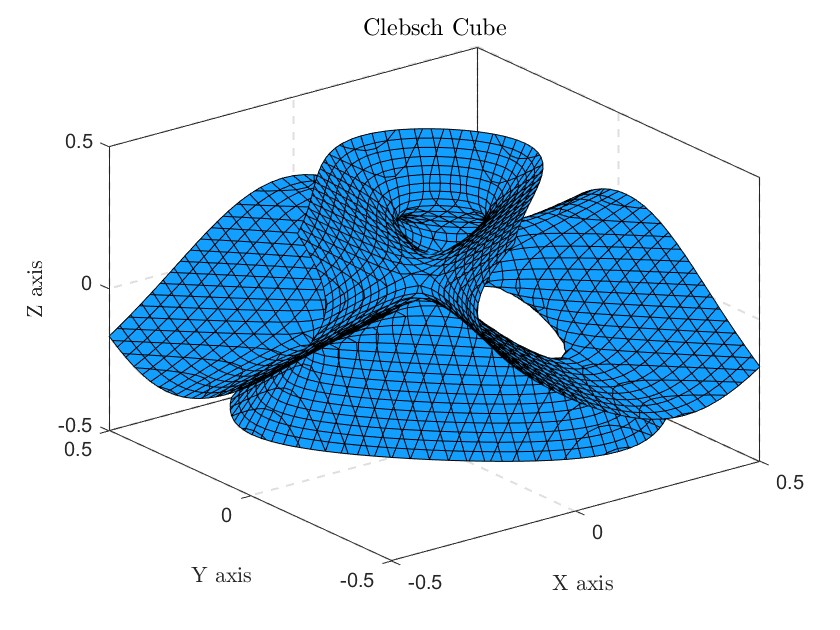}
    \caption{Zero-level set reconstruction of the Clebsch cubic utilizing the proposed RAIN-FIT algorithm on the $20\%$ corrupted dataset. The global topology is accurately recovered without relying on local normal estimates.}
    \label{fig:Cleb20}
\end{figure}

\subsection{3D Surface Fitting}
We present the efficacy of RAIN-FIT in fitting surfaces that can be represented by the zero set of a function, specifically, surfaces satisfying Assumption \ref{A1}. To quantify the accuracy of our surface recovery, we measure the angular alignment between the estimated coefficient vector $\hat{\mathbf{a}}_{L}$ and the analytically known optimal vector $\mathbf{a}^{*}$ using the absolute cosine similarity:
\begin{equation} \label{eq:cos_sim}
    |\cos{\theta}|=\frac{|\hat{\mathbf{a}}_{L}^{\top}\mathbf{a}^{*}|}{\|\hat{\mathbf{a}}_{L}\|\|\mathbf{a}^{*}\|}.
\end{equation}            
The absolute value is employed because the vectors $\mathbf{a}^{*}$ and $-\mathbf{a}^{*}$ define identical zero-level sets, rendering their sign difference geometrically inconsequential. 

Furthermore, to address realistic scenarios outlined in Assumption \ref{A1}—where the noise distribution is known (e.g., uniform) but its precise bounds are not—RAIN-FIT employs a straightforward grid search. We parameterize $\Hat{\mathbf{M}}(., f_{.})$ across varying noise levels and select the parameter $\boldsymbol\theta$ that yields the minimum singular value for the empirical matrix $\Hat{\mathbf{M}}_{\mathcal{D}_{L}}$. This optimally recovered parameter is then utilized to deduce $\hat{\mathbf{a}}_{L}$.

\subsubsection{Clebsch cube}To evaluate 3D performance on a topologically complex surface, we test RAIN-FIT on the Clebsch cubic \cite{clebsch}. The challenge of fitting this geometry is illustrated in Fig.~\ref{fig:ClebNoisy20}, which displays the target point cloud severely corrupted by noise. As a third-degree implicit polynomial, we set $\gamma=3$ and $n=3$. We analytically compute the optimal vector $\mathbf{a}^{*}$ and measure its alignment with $\hat{\mathbf{a}}_{L}$ via the cosine similarity \eqref{eq:cos_sim}.

As shown in Fig.~\ref{fig:clebsch}, RAIN-FIT robustly recovers $\mathbf{a}^{*}$ despite severe noise obfuscating structural details. Higher noise levels and the surface's higher-order polynomial nature demand proportionally larger sample sizes for convergence. The jagged transient response is characteristic of a single stochastic realization rather than an ensemble average.

To identify the unknown noise parameter $\boldsymbol\theta$, Fig.~\ref{fig:search_clebsch} plots the minimum singular value of $\Hat{\mathbf{M}}_{\mathcal{D}_{L}}$ across candidate noise levels for a dataset corrupted by 20\% noise. The minimum perfectly coincides with the true injected noise, validating our grid-search methodology for parameterizing $\Hat{\mathbf{M}}(\cdot, f_{\cdot})$. 

Utilizing this estimated parameter on the highly corrupted point cloud (Fig.~\ref{fig:ClebNoisy20}), RAIN-FIT accurately reconstructs the complex Clebsch surface (Fig.~\ref{fig:Cleb20}) using only a few thousand samples.
To demonstrate the fragility of methods dependent on local geometric estimates, Fig.~\ref{fig:Clebsch20Po} illustrates the output of Poisson surface reconstruction applied to the same dataset corrupted by a 20\% noise level. The stochastic perturbations disrupt the spatial coherence required for continuous normal vector fields, causing the algorithm to generate severe topological distortions and fail completely in recovering the true underlying manifold. Furthermore, SIREN additionally demands sample sizes in the millions to even initiate training. For comprehensive visual comparisons demonstrating the breakdown of these baseline methods, we direct the reader to \cite{sleem2026rain}.

\begin{figure}[t]
    \centering
    \includegraphics[scale=0.07]{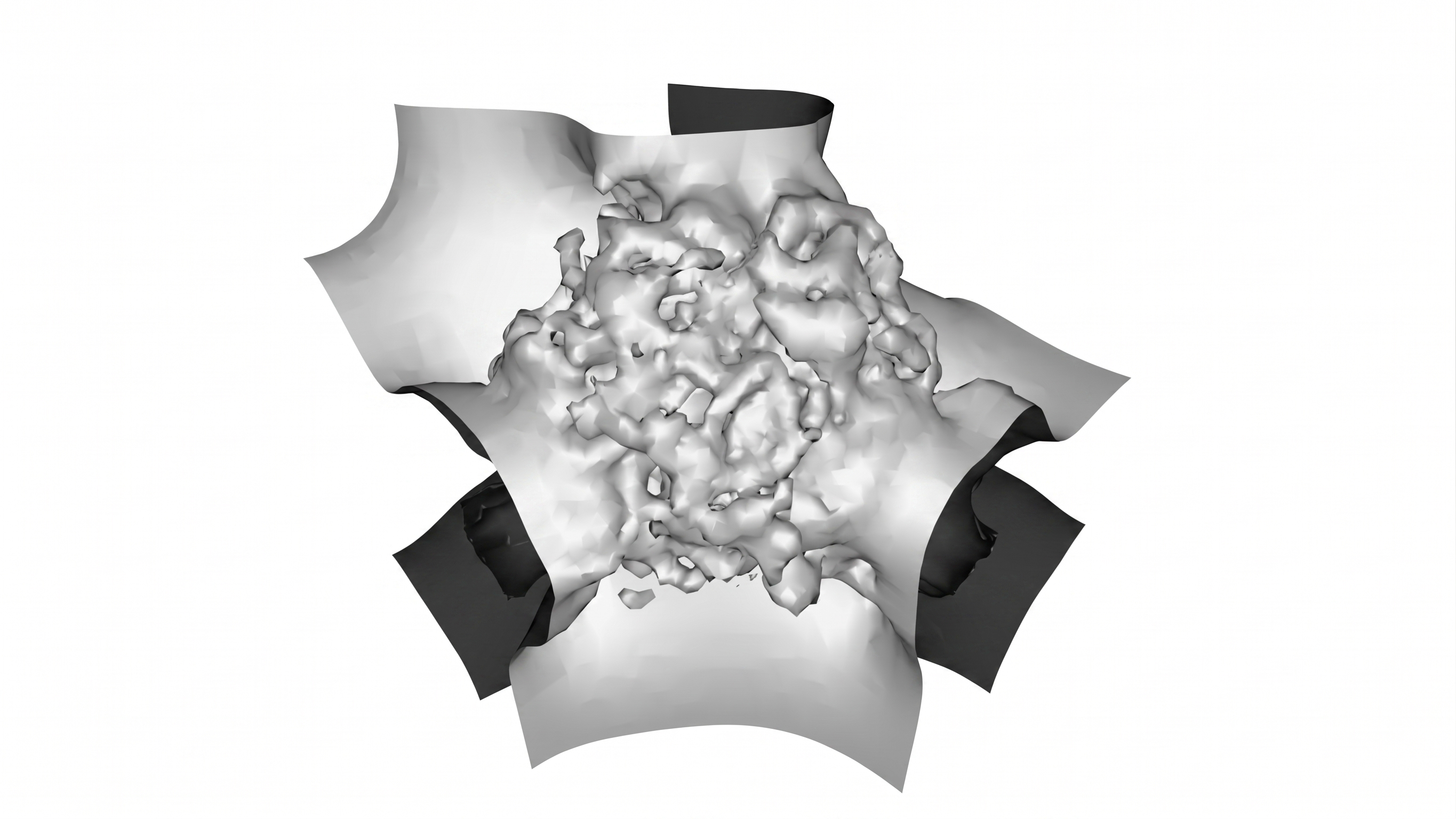}
    \caption{Failure of standard Poisson surface reconstruction on the $20\%$ corrupted Clebsch dataset. The stochastic disruption of the oriented normal vector field results in severe topological distortions.}
    \label{fig:Clebsch20Po}
\end{figure}

    \begin{figure}[t]
    \centering
    \begin{subfigure}[b]{0.2\textwidth}
        \centering
        \includegraphics[scale=0.23]{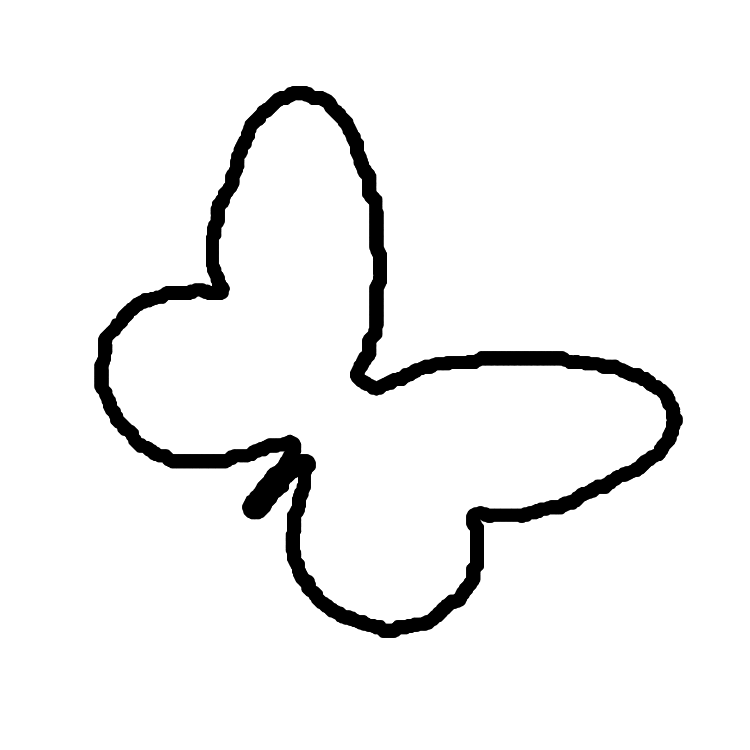}
        \caption*{}
        \label{fig:boot0}
    \end{subfigure}
    \hfill
    \begin{subfigure}[b]{0.2\textwidth}
        \centering
        \includegraphics[scale=0.23]{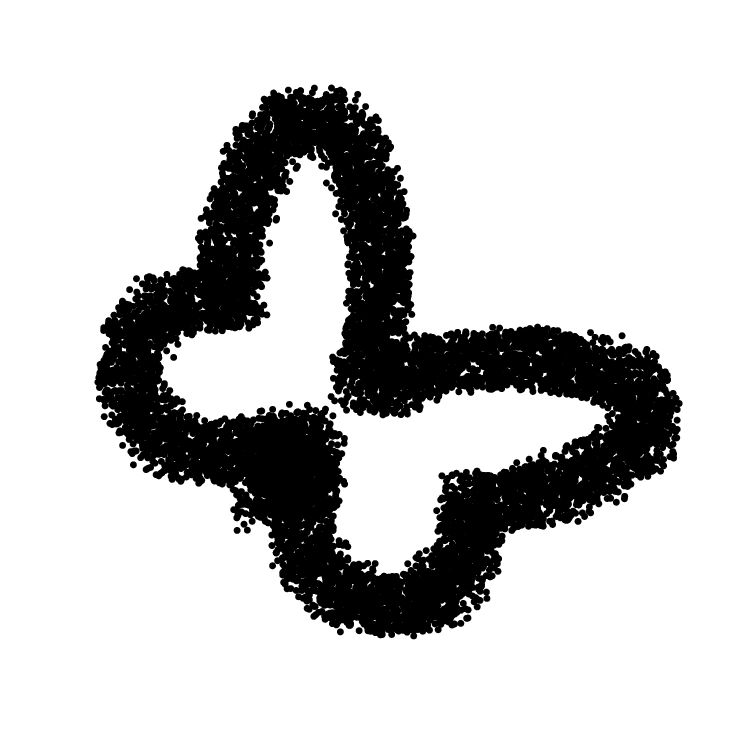}
        \caption*{}
        \label{fig:boot10}
    \end{subfigure}
    \hfill
    \begin{subfigure}[b]{0.2\textwidth}
        \centering
        \includegraphics[scale=0.23]{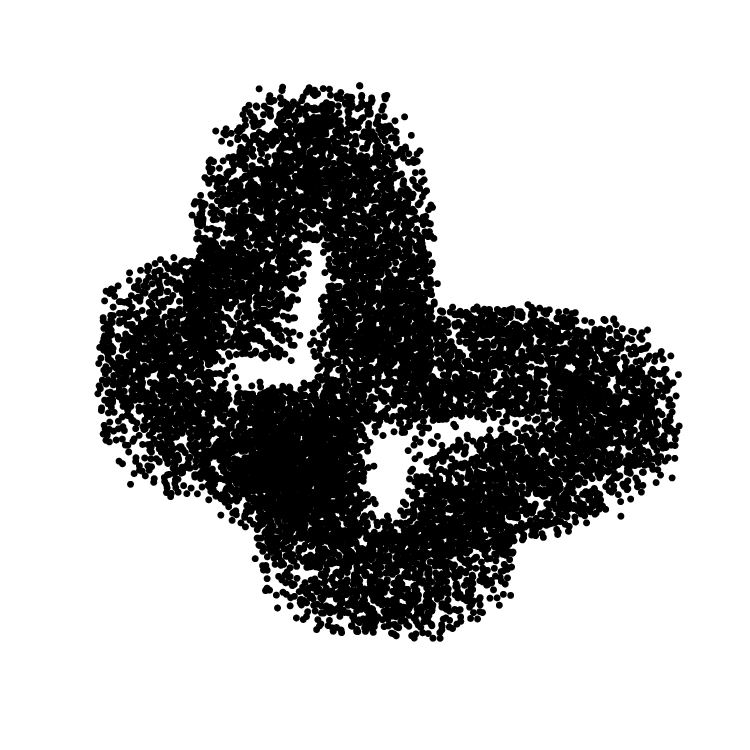}
        \caption*{}
        \label{fig:boot20}
    \end{subfigure}
    \hfill
    \begin{subfigure}[b]{0.2\textwidth}
        \centering
        \includegraphics[scale=0.23]{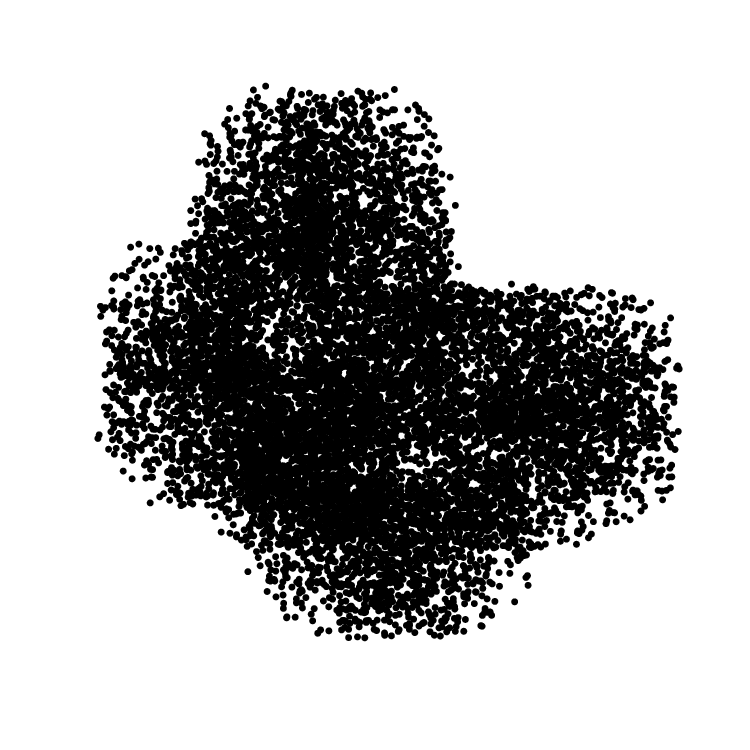}
        \caption*{}
        \label{fig:boot30}
    \end{subfigure}
    \vfill
    \begin{subfigure}[b]{0.2\textwidth}
        \centering
        \includegraphics[scale=0.3, trim={2cm -10cm 0 1cm}]{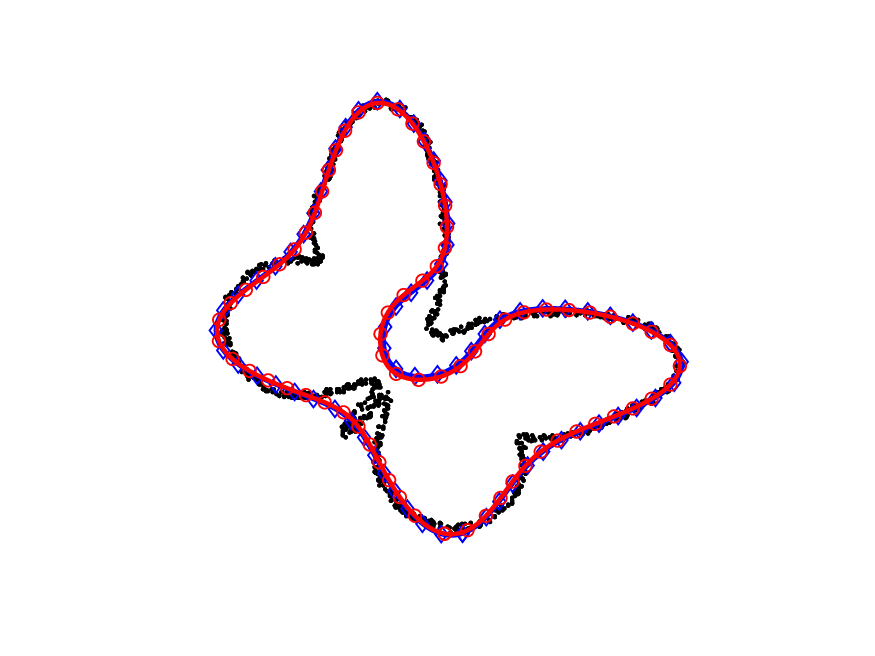}
        \caption*{}
        \label{fig:butterfly0}
    \end{subfigure}
    \hfill
    \begin{subfigure}[b]{0.2\textwidth}
        \centering
        \includegraphics[scale=0.3, trim={0 -10cm 0 1cm}]{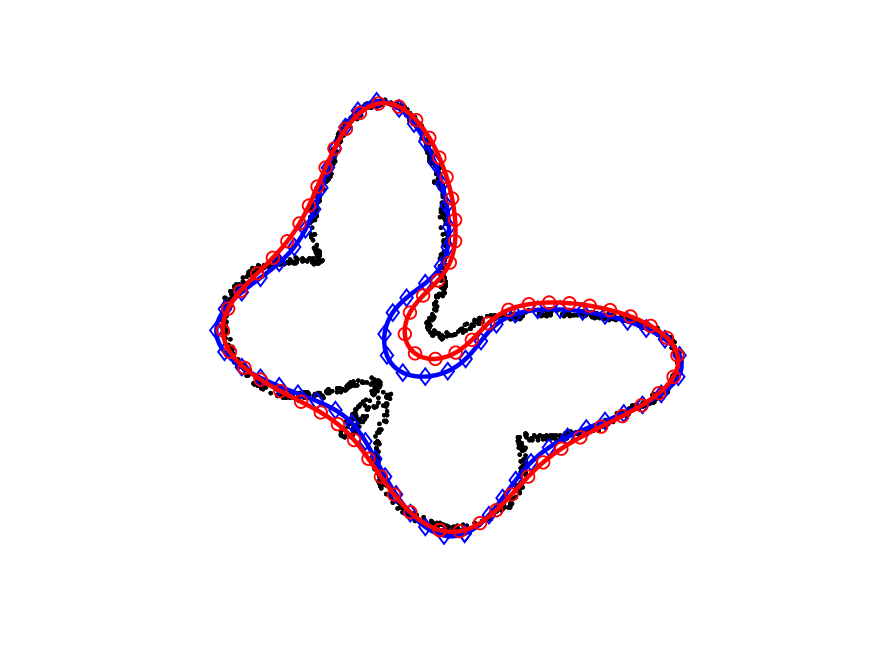}
        \caption*{}
        \label{fig:butterfly10}
    \end{subfigure}
    \hfill
    \begin{subfigure}[b]{0.2\textwidth}
        \centering
        \includegraphics[scale=0.3, trim={0 -10cm 0 1cm}]{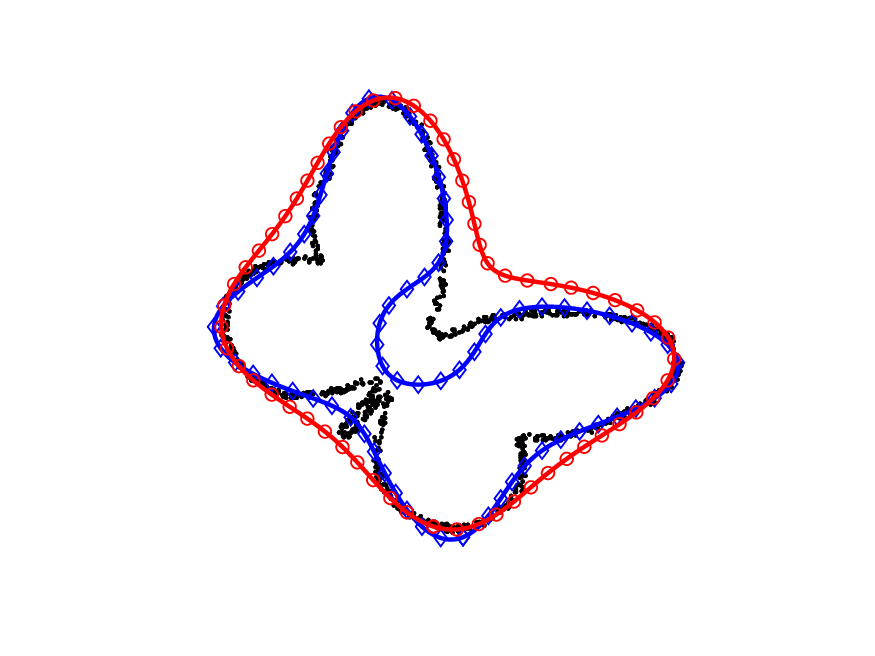}
        \caption*{}
        \label{fig:butterfly20}
    \end{subfigure}
    \hfill
    \begin{subfigure}[b]{0.2\textwidth}
        \centering
        \includegraphics[scale=0.3, trim={0 -10cm 0 1cm}]{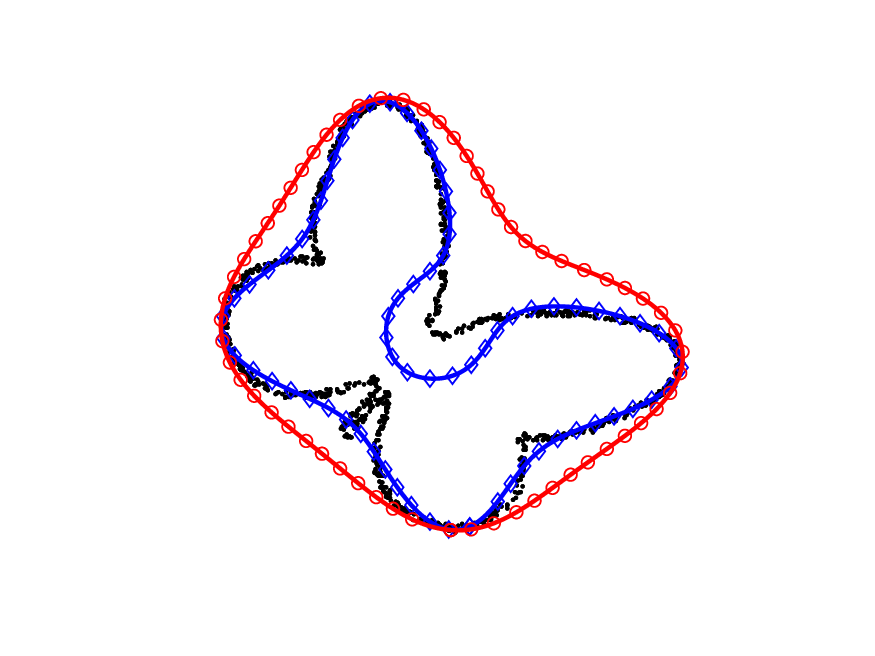}
        \caption*{}
        \label{fig:butterfly30}
    \end{subfigure}
    
    \vspace{-3.5cm}

    \caption{Fitting results employing RAIN-FIT (diamond blue) and Encoder-X (circle red). The black line depicts the original shape to be used as a reference for the quality of the fit. The columns (from left to right) correspond to the results for noise levels of 0\%, 10\%, 20\%, and 30\%.}\label{fig:2D_results}
\end{figure}
\subsection{2D Surface Fitting: Comparison With Encoder-X}
We now investigate $\mathbb{R}^{2}$ shapes that do not conform to zero sets. Data is mean-centered, normalized to a unit maximum Euclidean norm, and scaled via \eqref{eq:geometric_scaling}. Features are configured with $\gamma=4$ and $n=2$.

We benchmark RAIN-FIT against the state-of-the-art Encoder-X neural network \cite{9336312}. Fig.~\ref{fig:2D_results} presents visual comparisons across 0\%, 10\%, 20\%, and 30\% noise levels (the airplane shape is limited to 10\% due to its high aspect ratio). While both methods perform well in low-noise regimes, Encoder-X's reconstruction quality degrades severely at $\geq 20\%$ noise.

To ensure a fair computational evaluation, we modified Encoder-X's stopping criterion to halt training at epoch $k$ when the relative decoder loss \cite{9336312} satisfies $|L(k) - L(k-1)|/|L(k-1)| \leq \delta$. We empirically set $\delta = 10^{-4}$, representing the strictest threshold that guarantees algorithmic termination without degrading the reconstruction quality depicted in Fig. \ref{fig:2D_results}. This ensures we are evaluating Encoder-X at its absolute peak performance capability rather than prematurely halting its convergence. While platform differences preclude a strictly direct comparison---Encoder-X utilized an NVIDIA T4 GPU via PyTorch whereas RAIN-FIT ran on a standard Intel i7 CPU---the temporal advantage remains stark. Empowered by the closed-form formulation in Section \ref{sec:smoothing}, RAIN-FIT maintained a stable runtime of approximately 1.9 seconds across all noise levels. Conversely, Encoder-X required between 32 seconds and several minutes, with computation times degrading heavily under increased data corruption.

\section{Conclusion} \label{sec:conclusion}
In this paper, we introduced RAIN-FIT, a computationally efficient and dimension-agnostic algorithm for robust surface recovery from highly corrupted point clouds. By modeling data as level sets of basis functions, RAIN-FIT explicitly compensates for noise-induced bias, yielding a global mathematical representation of the surface without relying on fragile geometric preprocessing or hyperparameter tuning. We established sufficient theoretical conditions for valid feature functions—demonstrating that standard polynomial, trigonometric, and exponential bases naturally satisfy these criteria—and provided rigorous proofs of almost sure convergence. Furthermore, we extended the framework beyond strict zero-set constraints to accommodate smooth surface fitting in more complex scenarios. Comprehensive numerical evaluations confirm that RAIN-FIT consistently outperforms state-of-the-art baseline methods.

\bibliographystyle{IEEEtran}
\bibliography{Ref}

\end{document}